\newlength\figureheight 
\newlength\figurewidth 
\newcommand\Mark[1]{\textsuperscript#1}
\newcommand{\e}[1]{{\mathbb E}\left[ #1 \right]}
\def\delequal{\mathrel{\ensurestackMath{\stackon[1pt]{=}{\scriptstyle\Delta}}}}
\DeclarePairedDelimiterX\MeijerM[3]{\lparen}{\rparen}%
{\begin{smallmatrix}#1 \\ #2\end{smallmatrix}\delimsize\vert\,#3}
\newcommand\MeijerG[8][]{%
  G^{\,#2,#3}_{#4,#5}\MeijerM[#1]{#6}{#7}{#8}}
\newcommand\MeijerG*[7]{%
  G^{\,#1,#2}_{#3,#4}\MeijerM*{#5}{#6}{#7}}
\begin{document}

\title{Hybrid Rayleigh and Double-Weibull over Impaired RF/FSO System with Outdated CSI}

\author{Elyes~Balti\Mark{1},~Mohsen~Guizani\Mark{1}~and Bechir~Hamdaoui\Mark{2}\\
        \Mark{1}University of Idaho, USA,  \Mark{2}Oregon State University, USA}

\maketitle

\begin{abstract}
In this work, we present a global framework of a dual-hop RF/FSO system with multiple relays operating at the mode of amplify-and-forward (AF) with fixed gain. Partial relay selection (PRS) protocol with outdated channel state information (CSI) is assumed since the channels of the first hop are time-varying. The optical irradiance of the second hop are subject to the Double-Weibull model while the RF channels of the first hop experience the Rayleigh fading. The signal reception is achieved either by heterodyne or intensity modulation and direct detection (IM/DD). In addition, we introduce an aggregate model of hardware impairments to the source (S) and the relays since they are not perfect nodes. In order to quantify the impairment impact on the system, we derive closed-form, approximate, upper bound and high signal-to-noise ratio (SNR) asymptotic of the outage probability (OP) and the ergodic capacity (EC). Finally, analytical and numerical results are in agreement using Monte Carlo simulation.
\end{abstract}

\begin{IEEEkeywords}
Hardware impairments, Double-Weibull fading, Amplify-and-Forward, Partial relay selection, Outdated CSI.
\end{IEEEkeywords}

\IEEEpeerreviewmaketitle

\section{Introduction}
Free space optical (FSO) communications have recently gained enormous interest for many applications such as back-haul for wireless cellular network, disaster recovery and redundent link \cite{1} since it provides free access to the spectrum and high bandwidth. These advantages make the FSO technique not only the corner stone of wireless 5G but also as a complementary to the RF communication. In fact, RF communication reaches its bottleneck since it suffers from spectrum drought, expensive spectrum access, susceptibility to the interferences and network attacks \cite{2}. Unlike the RF mode, FSO communication is characterized by a high security level and a minimum of bit/symbol error rate. Although previous research attempts have confirmed the reliability of FSO technology, it has some limitations mainly caused by the atmospheric turbulences also called optical fading or scintillation. These fluctuations are originated by the variations of the refractive index of the propagation medium due to the heterogeneity in temperature and the fluctuations of the atmospheric pressures. Given that the optical signal is very sensitive to these turbulences, many previous attempts were interested in modelling the optical irradiance in order to quantify the fading impact on the system performance with accuracy. The first model proposed in this context is Log-Normal distribution which is dedicated to describe the weak turbulences. As the atmospheric turbulences become more severe, this model deviates from the experimental results. To overcome this discrepancy, recent work have proposed a more sophisticated model to describe moderate and strong fading called Double-Gamma. In fact, this model is widely used in recent work of mixed RF/FSO systems \cite{3}, \cite{4} since it provides more accurate performance metrics (outage probability, average bit/symbol error rate, ergodic capacity) than Log-Normal model. Despite these advantages, Double-Gamma suffers from either overestimation and underestimation in the tail region of the probability density function (PDF). To overcome this problem, Nestor \textit{et al}. \cite{5} developed an advanced irradiance model called Double-Weibull which is more precise and abide to the experimental data not only around centralized data but also in the tail data region. To achieve better performance, it is recommended to introduce this model into the mixed RF/FSO cooperative relaying systems which have recently attracted considerable attention since it provides not only better QoS and coverage probability but also enhances the system capacity. Recent research attempts have addressed many relaying schemes that can be implemented into the relays. These protocols are mainly Decode-and-Forward (DF) \cite{7}, \cite{8}, Amplify-and-Forward (AF) \cite{9}, \cite{10}, Quantify-and-Encode (QE) \cite{11}, \cite{12}. In practice, due to its low quality, the hardwares (source, relays ...) are susceptible to the impairments e.g., non-linear high power amplifier (HPA) \cite{17}, \cite{18}, phase noise \cite{19} and I/Q imbalance \cite{20}. Schenk \textit{et al}. \cite{21} have proven that the I/Q imbalance rotates the phase constellation and attenuates the amplitude while Dardari \textit{et al}. \cite{17} have concluded that the HPA non-linearities creates non-linear distortion during the signal amplification. Furthermore, many research attempts \cite{17}, \cite{21} and \cite{24} turned out that the system capacity is limited by a ceiling created by the joint effect of non-linear HPA and I/Q imbalance. Given that these impairments can be neglected for low rate systems, the imperfection impacts become significant for high rate systems and especially as the average SNR largely increases. Our contribution is to propose the Double-Weibull as a model for the optical irradiance of the second hop of the mixed FR/FSO system with multiple relays. In addition, we assume partial relay selection with outdated CSI \cite{13} \cite{4}, \cite{16} to select one relay among the sets. Although, the PRS is less performant than the opportunistic relay selection, it is more efficient in terms of power consumption and complexity. To generalize this work, we introduce an aggregate model of impairments to the source and the relays as the work proposed by \cite{f} where they introduced a general model of impairment to a mixed RF/FSO system assuming the Double-Gamma as a model for the irradiance. To the best of our knowledge, we are the first research team who propose a mixed RF/FSO systems with multiple relays where the source and the relays are affected by a general model of impairment. The RF channels are modelled by correlated Rayleigh while the FSO channels are subject to the Double-Weibull fading. Moreover, the signal is received either by heterodyne or IM/DD detection methods. The rest of this paper is organized as follows: section II presents the system and the channels models while the outage probability and the ergodic capacity analysis are detailed in section III. Analytical and numerical results following their discussion are presented in section IV. Finally, concluding remarks and future directions are reported in section V.
\section{System and Channels Models}
\subsection{System Model}
The system consists of $S$, $D$ and $N$ parallel relays wirelessly linked to $S$ and $D$. In order to pick a relay of rank $m$, PRS with outdated CSI based on the partial knowledge of the CSIs channels of the first hop is assumed. This protocol states that for each transmission, $S$ receives the CSIs ($\gamma_{1(l)}$ for $l$ = 1,... $N$) of the RF channels from the relays via local feedback. Once the CSIs are received, $S$ sorts the values of the CSIs in an increasing order of amplitude as: $\gamma_{1(1)}\leq\gamma_{1(2)}\leq \ldots \leq\gamma_{1(N)}$. Based on this sorting, $S$ selects the relay with the highest RF SNR which is clearly the relay of last rank $N$. Given that the relays operate at the half-dupplex mode, the best relay of rank $N$ may not be always available to forward the signal. In this case, $S$ will select the next best relay and so on so forth. In addition, the relay with the last rank is not always the best one even after the selection. In fact, the channels are time-varying and the feedback propagation from the relays to $S$ are very slow. In this case, the CSIs are suscpetible to significant variations and so their values before and after the selection are not the same. It turned out that the estimation of the channels is not perfect and hence, the relay selection is achieved based on the outdated CSIs. To model this imperfect channel estimation, we associate a time correlation coefficient $\rho$ between the outdated and the updated CSIs. Thereby, the best relay is not necessarily the one of the last rank since the selection is based on the outdated CSI.\\
Assume that $S$ selects the relay of rank $m$, the received signal at the relay is given by:
\begin{equation}
    y_{1(m)} = h_m (s + \eta_1) + \nu_1
\end{equation}
where $h_m$ is the RF channel fading, $s \in \mathbb{C}$ is the information signal, $\nu_1$  $\backsim$ $\mathcal{CN}$ (0, $\sigma^2_0$) is the AWGN of the RF channel, $\eta_1$ $\backsim$ $\mathcal{CN}$ (0, $\kappa^2_{1}P_1$) is the distortion noise at $S$, $\kappa_{1}$ is the impairment level at $S$ and $P_1$ is the average transmitted power from $S$.\\
Once the signal is completely received by the relay $R_m$, it is amplified by a fixed gain $G$ that depends on the average electrical SNR of the RF channels. This gain can be expressed as follows [20, eq.~(11)]:
\begin{equation}
G^2 \delequal \frac{P_2}{P_1\e{|h_m|^2}(1+\kappa^2_1)+\sigma^2_0}
\end{equation}
where $P_2$ is the average transmitted power from the relay to $D$ and $\e{.}$ is the expectation operator.\\
The amplified signal at the output of the relay is given by:
\begin{equation}
y_{opt(m)} = G (1 + \eta_e) y_{1(m)}
\end{equation}
where $\eta_e$ is the electrical-to-optical conversion coefficient.\\
Finally the received signal at the destination can be expressed as follows:
\begin{equation}
y_{2(m)} = (\eta_o I_m)^{\frac{r}{2}} [G (1 + \eta_e)(h_m (s + \eta_1) + \nu_1) + \eta_{2}] + \nu_2 
\end{equation}
where $\eta_o$ is the optical-to-electrical conversion coefficient, $I_m$ is the optical irradiance between R$_m$ and $D$, $\eta_{2}$ $\backsim$ $\mathcal{CN}$ (0, $\kappa^2_{2}P_2$) is the distortion noise at the relay R$_m$, $\kappa_{2}$ is the impairment level at R$_m$, $\nu_2$  $\backsim$ $\mathcal{CN}$ (0, $\sigma^2_0$) is the AWGN of the optical channel and $r$ = 1, 2 stands for heterodyne and IM/DD detections respectively.
\subsection{End-to-End Signal-to-Noise plus Distortion Ratio (SNDR)}
The SNDR depends on both the electrical $\gamma_{1(m)}$ and  optical $\gamma_{2(m)}$ SNRs of the two hops which can be defined by:
\begin{equation}
\gamma_{1(m)} = \frac{|h_m|^2P_1}{\sigma^2_0} = |h_m|^2\overline{\gamma}_1
\end{equation}
where $\overline{\gamma}_1 = \frac{P_1}{\sigma^2_0}$ is the average SNR of the first hop.
\begin{equation}
\gamma_{2(m)} = \frac{|I_m|^r\eta_{o}^rP_2}{\sigma^2_0} = |I_m|^r\overline{\gamma}_r
\end{equation}
where $\overline{\gamma}_r = \frac{\eta_{o}^r P_2}{\sigma^2_0}$ is the average electrical SNR of the second hop.
Finally, the SNDR can be expressed as follows:
\begin{equation}
\gamma_{\text{ni}} = \frac{|h_m|^2 |I_m|^r}{\delta |h_m|^2|I_m|^r + |I_m|^r(1 + \kappa^2_{2})\frac{\sigma^2_0}{P_1} + \frac{\sigma^2_0}{P_1 G^2}} 
\end{equation}
After some algebraic manipulations, the SNDR can be expressed as follows:
\begin{equation}
\gamma_{\text{ni}} = \frac{\gamma_{1(m)}\gamma_{2(m)}}{\delta \gamma_{1(m)}\gamma_{2(m)} + (1 + \kappa^2_{2})\gamma_{2(m)} + C}
\end{equation}
where $\delta \delequal \kappa^2_{1} + \kappa^2_{2} + \kappa^2_{1}\kappa^2_{2}$ and $C = \e{\gamma_{1(m)}}(1+\kappa_1^2)+1$.\\
Note that for ideal case, the end-to-end SNR is given by:
\begin{equation}
\gamma_{\text{id}} = \frac{\gamma_{1(m)}\gamma_{2(m)}}{\gamma_{2(m)} + \e{\gamma_{1(m)}} +1 }
\end{equation}
\subsection{Channels Model}
\subsubsection{Statistics of the electrical channels}: We model the relation between the outdated and updated CSIs as follows:
\begin{equation}
h_{1(m)} = \sqrt{\rho} \tilde{h}_{1(m)} + \sqrt{1-\rho} w_{1(m)} 
\end{equation}
where $h_{1(m)}$ and $\tilde{h}_{1(m)}$ are the updated and outdated CSIs respectively and $w_{1(m)}$ follows the circularly symmetric complex gaussian distribution with the same variance of the channel gain $\tilde{h}_{1(m)}$.\\
The coefficient $\rho$ is given by the Jakes' autocorrelation model \cite{39} as follows:
\begin{equation}
\rho = J_0(2\pi f_{d} T_d) 
\end{equation}
where, $J_0(.)$ is the zeroth order Bessel function of the first kind eq. (8.411) in \cite{37}; $T_d$ is the time delay between the current CSI and the delayed version and $f_d$ is the maximum Doppler frequency of the channels.\\
Since the RF channels experience the Rayleigh fading, the instantaneous electrical SNR $\gamma_{1(m)}$ follows the correlated exponential distribution. The PDF can be expressed as follows:
\begin{equation}
\begin{split}
f_{\gamma_{1(m)}}(x) = \sum_{n=0}^{m-1} {m-1 \choose n} 
\frac{(-1)^n}{[(N-m+n)(1-\rho)+1]\overline{\gamma}_1}\\\times~m{N \choose m} \exp\left(-\frac{ (N-m+n+1)x }{[(N-m+n)(1-\rho)+1]\overline{\gamma}_1}\right)
\end{split}
\end{equation}
After some mathematical manipulations, the CDF of $\gamma_{1(m)}$ can be expressed as follows:
\begin{equation}
\begin{split}
F_{\gamma_{1(m)}}(x) = 1 - m{N \choose m}\sum_{n=0}^{m-1} \frac{(-1)^n}{N-m+n+1} \\ 
\times \exp\left(-\frac{ (N-m+n+1)x }{[(N-m+n)(1-\rho)+1]\overline{\gamma}_1}\right)
\end{split}
\end{equation}
The constant $C$ mentioned earlier depends on the expression of $\e{\gamma_{1(m)}}$, which can be obtained as:
\begin{equation}
\begin{split}
\e{\gamma_{1(m)}} = m{N \choose m}\sum_{n=0}^{m-1} {m-1 \choose n} (-1)^n\\ \times~\frac{[(N-m+n)(1-\rho)+1]\overline{\gamma}_1}{(N-m+n+1)^2}  
\end{split}
\end{equation}
\subsubsection{Statistics of the optical channels}:
The PDF of the random variable $X$ that follows the Weibull distribution can be written as follows:
\begin{equation}
f_X(x) = \frac{\beta_1 x^{\beta_1-1}}{\Omega_1}~\exp\left(-\frac{x^{\beta_1}}{\Omega_1} \right)
\end{equation}
where $\Omega_1 > 0$ is the average fading power of the optical fading and $\beta_1 > 0$ describes the strength of the irradiance fluctuations.\\
According to the scintillation theory, it is possible to model the irradiance as the product of two independent random variables $X, Y$  following the Weibull distribution.
Since the irradiance is modelled by the Double-Weibull, the PDF of $I = XY$ can be obtained by \cite{5}, eq.~(5):
\begin{equation}
\begin{split}
f_{I}(I) = \frac{\beta_2k\sqrt{kl}}{(2\pi)^{\frac{k+l}{2}-1}I}
\MeijerG[\Bigg]{0}{k+l}{k+l}{0}{\Lambda_0}{-}{\left(\frac{\Omega_2k}{I^{\beta_2}}\right)^k(\Omega_1l)^l}
\end{split}
\end{equation}
where $\Lambda_0$ is given by:
\begin{equation*}
\Lambda_0 = [\Delta(l;0), \Delta(k;0)]
\end{equation*}
$G^{m,n}_{p,q}(.)$ is the Meijer's G-function, $\Delta(j;x) \delequal \frac{x}{j}, \ldots , \frac{x+j-1}{j}$ and $l, k$ are positive integers satisfying:
\begin{equation}
\frac{l}{k} = \frac{\beta_2}{\beta_1}
\end{equation}
where $\beta_1, \beta_2 > 0$ are the parameters describing the strength of the optical irradiance from large and small scale turbulent eddies. In addition, $\Omega_1, \Omega_2 > 0$ are the average power of the channels.\\
The CDF of the optical irradiance can be expressed as follows:
\begin{equation}
\begin{split}
 F_{I}(I) = \frac{\sqrt{kl}}{(2\pi)^{\frac{k+l}{2}-1}}
\MeijerG[\Bigg]{k+l}{1}{1}{k+l+1}{1}{\Lambda_1, 0}{\frac{I^{\beta_1l}}{(\Omega_1l)^l(\Omega_2k)^k}}   
\end{split}
\end{equation}
where $\Lambda_1$ is given by:
\begin{equation*}
    \Lambda_1 = [\Delta(l;1), \Delta(k;1)]
\end{equation*}
The normalized variances $\sigma_i^2$ and the average fading powers $\Omega_i$ of the large and small scale atmospheric turbulence are given by:
\begin{equation*}
\sigma_i^2 = \frac{\Gamma(1 + \frac{2}{\lambda_i})}{\Gamma(1 + \frac{2}{\lambda_i})^2} - 1,~  
\Omega_i = \left(\frac{1}{\Gamma(1+\frac{1}{\beta_i})}\right)^{\beta_i}
\end{equation*}
where $i = 1, 2$ and $\sigma^2_X = \sigma^2_1, \sigma^2_Y = \sigma^2_2$. $\lambda_i$ can be determined by $\lambda_i = \sigma_i^{-1.0852}$.\\
Now, we substitute the analytical expression of the optical channel $I_m$ by $\left(\frac{\gamma_{2(m)}}{\overline{\gamma}_r}\right)^{\frac{1}{r}}$ in eq.~(18) 
and after some mathematical manipulations, the PDF and the CDF of the instantaneous SNR $\gamma_{2(m)}$ can be respectively written as follows:
\begin{equation}
\begin{split}
f_{\gamma_{2(m)}}(\gamma_{2(m)}) = \frac{\beta_2k\sqrt{kl}}{r\gamma_{2(m)}(2\pi)^{\frac{k+l}{2}-1}}~~~~~~~\\ \times~
\MeijerG[\Bigg]{0}{k+l}{k+l}{0}{\Lambda_0}{-}{(\Omega_1l)^l(\Omega_2k)^k\left(\frac{\overline{\gamma}_r}{\gamma_{2(m)}}\right)^{\frac{\beta_2k}{r}}}   
\end{split}
\end{equation}
\begin{equation}
\begin{split}
F_{\gamma_{2(m)}}(\gamma_{2(m)}) = \frac{\sqrt{kl}}{(2\pi)^{\frac{k+l}{2}-1}}~~~~~~~~~~~~~~~\\ \times~
\MeijerG[\Bigg]{k+l}{1}{1}{k+l+1}{1}{\Lambda_1, 0}{\frac{1}{(\Omega_1l)^l(\Omega_2k)^k}\left(\frac{\gamma_{2(m)}}{\overline{\gamma}_r}\right)^{\frac{\beta_1l}{r}}}   
\end{split}
\end{equation}
The $n$-th moment of the random variable $X$ is given by:
\begin{equation}
\e{X^n} = \Omega_1^\frac{n}{\beta_1}\Gamma\left(1 + \frac{n}{\beta_1}\right)
\end{equation}
After some mathematical manipulations, the $n$-th moment of the instantaneous SNR $\gamma_{2(m)}$ can be written as follows:
\begin{equation}
\e{\gamma_{2(m)}^n} = \overline{\gamma}_r^n\Omega_1^{\frac{nr}{\beta_1}}\Omega_2^{\frac{nr}{\beta_2}}\Gamma\left(1 + \frac{nr}{\beta_1} \right)\Gamma\left(1 + \frac{nr}{\beta_2} \right)
\end{equation}
\section{Performance Analysis}
In this section we present the analysis of the system performance in terms of the OP and EC. We will derive the expressions of the OP and the upperbound of the EC in terms of the Meijer's G-function. We will also evaluate the system performance in particular at the high SNR regime and we will show that the EC and the SNDR are saturated by the ceilings created by the hardware impairments.
\subsection{Outage Probability Analysis}
The outage probability is defined as the probability that the end-to-end SNDR falls below an outage threshold $\gamma_{\text{th}}$. It can be written as follows:
\begin{equation}
P_{\text{out}}(\gamma_{\text{th}}) \delequal \text{Pr}\{\gamma_{\text{ni}} < \gamma_{\text{th}}\}
\end{equation}
where \text{Pr(.)} is the probability notation. Then, we substitute the expression of the SNDR in eq.~(21) and after applying some mathematical manipulations, the OP can be written as follows:
\begin{equation}
\begin{split}
P_{\text{out}}(\gamma_{\text{th}}) = \int_{0}^{\infty} F_{\gamma_{1(m)}}\left(\frac{(1+\kappa_2^2)\gamma_{\text{th}}}{1-\delta \gamma_{\text{th}}} + \frac{C\gamma_{\text{th}}}{(1-\delta \gamma_{\text{th}})\gamma_{2(m)}}   \right)~~~~~~~~\\ \times~f_{\gamma_{2(m)}}(\gamma_{2(m)})~d\gamma_{2(m)}~~~~~~~~~~~~~~~~~~~~~~~~~~~
\end{split}
\end{equation}
Note that the CDF $F_{\gamma_{1(m)}}$ is defined only if $1 - \delta\gamma_{\text{th}} > 0$, otherwise it is equal to a unity. Using the identity given by \cite{26}, eq.~(2.24.3.1) and after some mathematical manipulations, the OP can be derived as follows:
\begin{equation}
\begin{split}
P_{\text{out}}(\gamma_{\text{th}}) = 1 - {N \choose m}\frac{mk\sqrt{\beta_2l}r^{\mu - 1}}{(2\pi)^{\frac{\beta_2l+r(k+l)-3}{2}}} \sum_{n=0}^{m-1}\frac{(-1)^n}{N-m+n+1}~~~~~~~~~~~~~~~\\ \times~\exp\left(-\frac{(N-m+n+1)(1+\kappa_2^2)\gamma_{\text{th}}}{[(N-m+n)(1-\rho)+1](1-\delta\gamma_{\text{th}})\overline{\gamma}_1} \right)~~~~~~~~~~~~~\\ \times~
{m-1 \choose n}~\MeijerG[\Bigg]{0}{r(k+l)+\beta_2k}{r(k+l)+\beta_2k}{0}{\Lambda_2}{-}{\zeta}~~~~~~~~~~~~~~~~~~~~~~~~~~~~~~~~
\end{split}
\end{equation}
where $\mu, \Lambda_2$ and $\zeta$ are respectively given by:
\begin{equation*}
\begin{split}
\mu = - \sum_{j=0}^{k+l} \Lambda_0(j) + \frac{k+l}{2} +1~~~~~~~~~~~~~~~~~~~~~~~~\\
\Lambda_2 = [\Delta(r;\Lambda_0), \Delta(\beta_2k;1)]~~~~~~~~~~~~~~~~~~~~~~~~~~~~\\
\zeta = \left((\Omega_1l)^l(\Omega_2k)^kr^{k+l}\right)^r \left(\frac{\beta_2k\overline{\gamma}_1\overline{\gamma}_r}{\tau \xi}\right)^{\beta_2k}~~~~~~~~
\end{split}
\end{equation*}
$\tau$ and $\xi$ are obtained by:
\begin{equation*}
\begin{split}
\tau = \frac{C\gamma_{\text{th}}}{1- \delta \gamma_{\text{th}}},~~ 
\xi = \frac{N-m+n+1}{(N-m+n)(1-\rho)+1}
\end{split}
\end{equation*}
The OP is equal to eq.~(25) for $\gamma_{\text{th}} < \frac{1}{\delta}$, otherwise, it is equal to a unity.
\subsection{Ergodic Capacity Analysis}
The ergodic capacity, expressed in bps/Hz, is defined as the maximum error-free data rate transferred by the system channel. It can be written as follows:
\begin{equation}
\overline{C} = \e{\log_2(1 + c\gamma_{\text{ni}})}
\end{equation}
where $c = 1$ indicates the heterodyne detection and $c = \frac{e}{2\pi}$ for IM/DD. The ergodic capacity can be derived by evaluating the PDF of the SNDR. However, an exact analytical expression of eq.~(26) is not solvable. To evaluate the ergodic capacity, a numerical evaluation is required.\\
It is possible to derive a simpler form of an upper bound which is given by the following theorem.
\newtheorem{theorem}{Theorem}
\begin{theorem}
For Asymmetric (Rayleigh/Double-Weibull) channels, the system capacity $\overline{C}$ with AF relaying protocol and hardware impairments is upper bounded by:
\begin{equation}
    \overline{C} \leq \log_{2}\left(1 + c\frac{\mathcal{J}}{\mathcal{J}\delta + 1}\right)
\end{equation}
\end{theorem}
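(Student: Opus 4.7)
The plan is to apply Jensen's inequality, but first I need to rewrite the SNDR in a form where the concavity becomes transparent. Starting from
\begin{equation*}
\gamma_{\text{ni}} = \frac{\gamma_{1(m)}\gamma_{2(m)}}{\delta \gamma_{1(m)}\gamma_{2(m)} + (1+\kappa_2^2)\gamma_{2(m)} + C},
\end{equation*}
I would divide numerator and denominator by $(1+\kappa_2^2)\gamma_{2(m)} + C$ and introduce the single random variable
\begin{equation*}
W \delequal \frac{\gamma_{1(m)}\gamma_{2(m)}}{(1+\kappa_2^2)\gamma_{2(m)} + C},
\end{equation*}
so that $\gamma_{\text{ni}} = W/(\delta W + 1)$. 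This collapses all the channel randomness into one nonnegative variable and puts the integrand of the capacity in the closed form $h(W) \delequal \log_{2}\bigl(1 + cW/(\delta W + 1)\bigr)$.

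Next I would verify concavity of $h$ on $[0,\infty)$. The cleanest way is to note that
\begin{equation*}
h(W) = \log_{2}\bigl((\delta + c)W + 1\bigr) - \log_{2}(\delta W + 1),
\end{equation*}
differentiate twice, and check that $h''(W)\le 0$ reduces to the obviously true inequality $\delta + c \ge \delta$. With concavity in hand, Jensen's inequality gives
\begin{equation*}
\overline{C} = \e{h(W)} \le h\bigl(\e{W}\bigr) = \log_{2}\!\left(1 + c\,\frac{\mathcal{J}}{\delta\mathcal{J}+1}\right),
\end{equation*}
where $\mathcal{J} \delequal \e{W}$, which is precisely the claimed bound.

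The conceptual work is minimal — the proof is essentially a one-variable Jensen argument once the SNDR is reparameterized. The real obstacle is bookkeeping rather than novelty: making sure that the reparameterization $W = \gamma_{1(m)}\gamma_{2(m)}/[(1+\kappa_2^2)\gamma_{2(m)} + C]$ is algebraically consistent with the expression of $\gamma_{\text{ni}}$ used throughout the paper, and verifying concavity without a sign error. A minor point worth flagging in the write-up is that $c \ge 0$ in both detection regimes ($c=1$ for heterodyne and $c=e/(2\pi)$ for IM/DD), which is exactly what is needed to close the $h''\le 0$ argument. Computing $\mathcal{J}$ in closed form is not required by the statement, so I would defer any explicit evaluation of $\e{W}$ to a subsequent lemma (using the statistics of $\gamma_{1(m)}$ and $\gamma_{2(m)}$ derived in Section~II), and keep the present proof focused on the concavity-plus-Jensen argument.
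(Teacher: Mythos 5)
Your argument is correct and coincides with the approach the paper relies on: the theorem is stated without a written-out proof, but the paper's closed-form $\mathcal{J}$ is precisely $\e{W}=\e{\gamma_{1(m)}}\,\e{\frac{\gamma_{2(m)}}{(1+\kappa_2^2)\gamma_{2(m)}+C}}$ (factored using independence of the two hops), so your reparameterization $\gamma_{\text{ni}}=W/(\delta W+1)$ followed by concavity of $\log_2\bigl(1+cx/(\delta x+1)\bigr)$ and Jensen is exactly the intended derivation. The only piece you defer --- evaluating $\e{W}$ via the Meijer $G$-function identity to recover the stated $\mathcal{J}$ --- is indeed just bookkeeping and does not affect the validity of the bound.
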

where $\mathcal{J}$ is given by:
\begin{equation}
\mathcal{J} = \frac{\beta_2k\sqrt{kl}r^{\mu-1}\e{\gamma_{1(m)}}}{(2\pi)^{\beta_2k+r\frac{k+l}{2}-2}(1+\kappa_2^2)} \MeijerG[\Bigg]{\beta_2k}{r(k+l)+\beta_2k}{r(k+l)+\beta_2k}{\beta_2k}{\Lambda_2}{\Delta(\beta_2k;1)}{\varrho}
\end{equation}
where $\varrho$ is given by:
\begin{equation*}
\begin{split}
\varrho = ((\Omega_1l)^l(\Omega_2k)^kr^{k+l})^r\left(\frac{(1+\kappa_2^2)\overline{\gamma}_r}{C}\right)^{\beta_2k}
\end{split}
\end{equation*}
Although deriving a closed-form of the ergodic capacity is very complex, we can find an approximate simpler form by applying the approximation given by \cite{25}, eq.~(35):
\begin{equation}
\e{\log_2\left(1 + \frac{\psi}{\varphi}\right)} \approx \log_2\left(1+\frac{\e{\psi}}{\e{\varphi}} \right)
\end{equation}
For high SNR regime, the behavior of the SNDR is expressed as:
\begin{equation}
\lim_{\overline{\gamma}_1,\overline{\gamma}_r\to\infty}\gamma_{\text{ni}} = \frac{1}{\delta}
\end{equation}
We observe that the SNDR converges to a ceiling $\gamma^* = \frac{1}{\delta}$.
\newtheorem{thm}{Theorem}
\newtheorem{cor}[thm]{Corollary}
\begin{cor}
For larger values of $\overline{\gamma}_1$ and $\overline{\gamma}_r$ and mutually independent RF and optical fadings, the average channel capacity converges to a ceiling defined by $C^* = \log_2(1+c\gamma^*)$. 
\end{cor}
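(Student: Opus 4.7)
The plan is to combine the pointwise asymptote already stated in eq.~(30) with a dominated (bounded) convergence argument applied directly to eq.~(26). The target identity reduces to interchanging limit and expectation in $\overline{C} = \e{\log_{2}(1+c\gamma_{\text{ni}})}$, which is legitimate as soon as the integrand is almost-surely convergent and uniformly bounded by an integrable envelope.

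For the pointwise step, I would note that under the continuous Rayleigh and Double-Weibull laws one has $|h_m|^{2} > 0$ and $I_m > 0$ almost surely, so $\gamma_{1(m)} = |h_m|^{2}\overline{\gamma}_{1}$ and $\gamma_{2(m)} = |I_m|^{r}\overline{\gamma}_{r}$ tend to infinity whenever $\overline{\gamma}_{1},\overline{\gamma}_{r}\to\infty$. The algebra that produced eq.~(30) then gives $\gamma_{\text{ni}}\to \gamma^{*} = 1/\delta$ almost surely, and by continuity of $\log_{2}(1+c\cdot)$, also $\log_{2}(1+c\gamma_{\text{ni}})\to C^{*}$. For the envelope, the SNDR in eq.~(8) admits the deterministic ceiling $\gamma_{\text{ni}}\leq 1/\delta$ for every realization, because the denominator is at least $\delta\gamma_{1(m)}\gamma_{2(m)}$; hence $0\leq \log_{2}(1+c\gamma_{\text{ni}})\leq C^{*}$ uniformly, and mutual independence of the RF and FSO fadings guarantees a well-defined product measure under which this constant envelope is integrable.

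Dominated convergence then yields
\begin{equation*}
\lim_{\overline{\gamma}_{1},\overline{\gamma}_{r}\to\infty}\overline{C} \;=\; \e{\lim_{\overline{\gamma}_{1},\overline{\gamma}_{r}\to\infty}\log_{2}(1+c\gamma_{\text{ni}})} \;=\; \log_{2}(1+c\gamma^{*}) \;=\; C^{*},
\end{equation*}
which is the claim. As an independent cross-check, one can verify that the Theorem~1 upper bound tends to the same ceiling: $\e{\gamma_{1(m)}}$ in eq.~(14) grows linearly in $\overline{\gamma}_{1}$ and the Meijer $G$-function argument $\varrho$ diverges in $\overline{\gamma}_{r}$, forcing $\mathcal{J}\to\infty$ and therefore $\log_{2}(1+c\mathcal{J}/(\mathcal{J}\delta+1))\to C^{*}$. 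The only delicate point, and the closest thing to an obstacle, is justifying the interchange of limit and expectation; once the universal bound $\gamma_{\text{ni}}\leq 1/\delta$ is in hand this step is automatic, and no explicit manipulation of the Meijer $G$-function densities is required.
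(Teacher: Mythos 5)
Your proposal is correct and follows essentially the same route as the paper, which likewise invokes dominated convergence together with the deterministic bound $\gamma_{\text{ni}} \leq \gamma^{*} = 1/\delta$ to pass the limit through. In fact you are more careful than the paper's own one-line argument: you correctly separate continuity of $\log_{2}(1+c\,\cdot)$ (for the pointwise step) from the DCT (for interchanging limit and expectation), whereas the paper conflates the two by attributing the move of the limit \emph{inside the logarithm} to the DCT and never explicitly writes the expectation interchange.
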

\begin{proof}
Applying the dominated convergence theorem and given that the SNDR is limited by $\gamma^*$, the limit can be moved inside the logarithm function as shown below:
\begin{equation}
\begin{split}
\lim_{\overline{\gamma}_1,\overline{\gamma}_r\to\infty}\log_2(1 + c\gamma_{\text{ni}}) = \log_2(1 + c\lim_{\overline{\gamma}_1,\overline{\gamma}_r\to\infty}\gamma_{\text{ni}})\\
= \log_2(1+c\gamma^*)~~~~~~~~~~~
\end{split}
\end{equation}
\end{proof}
\section{Numerical Results}
This section provides numerical results obtained by using the mathematical formulations of the previous section.\\
The electrical channel is subject to the correlated Rayleigh fading which can be generated using the algorithm in \cite{28}. The atmospheric turbulence is modeled by the Double-Weibull fading, which can be generated by using the formula, $I = XY$, where $X$ and $Y$ are mutually independent Weibull random variables.
\vspace*{-0.5cm}
\begin{center}
\includegraphics[width=9cm,height=6.5cm]{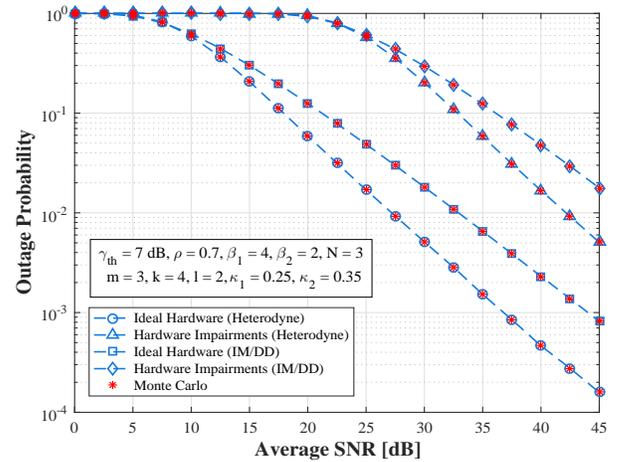}
\captionof{figure}{Outage probability versus the average SNR for ideal and non-ideal hardware under IM/DD and heterodyne detection}
\label{fig1}
\end{center}
Fig.~1 shows the dependence of the OP with respect to the average SNR. As proved by previous work, we note that the heterodyne detection outperforms the IM/DD method for our system. Moreover, the impact of the hardware impairments are clearly observed compared to the ideal hardware case. For low SNR, the impairments have small impact on the performance and so it can be neglected as we mentioned earlier. As the average SNR increases, the impact of the impairments becomes more severe enough to be of high importance and must not be neglected.
\begin{center}
\includegraphics[width=9cm,height=6.5cm]{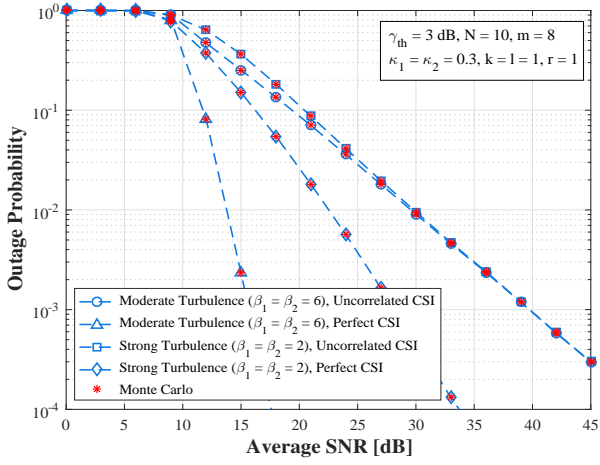}
\captionof{figure}{Outage probability versus the average SNR for different values of the correlation coefficient $\rho$ under moderate and strong turbulences}
\label{fig1}
\end{center}
\begin{center}
\includegraphics[width=9cm,height=6.5cm]{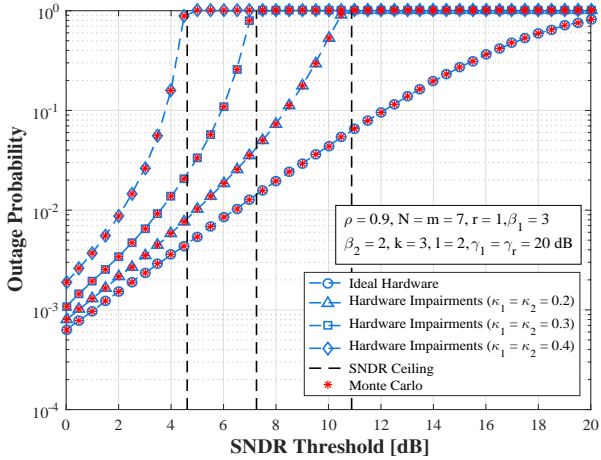}
\captionof{figure}{Outage  probability versus the SNDR threshold for ideal and non-ideal hardware}
\label{fig1}
\end{center}
The dependence of the OP for AF relaying protocol on the average SNR is given by Fig.~2. As expected, the outage performance is better under the moderate turbulence condition and suddenly deteriorate as the turbulence becomes strong and severe. This result is clearly observed, especially for the case of full correlation of CSIs ($\rho$ = 1). It turned out that the system substantially depends on the state of the optical channels. As the correlation $\rho$ between the CSI used for relay selection and the CSI used for transmission increases, i.e., the two CSIs become more and more correlated, the selection of the best relay is certainly achieved ($m$ = $N$). In this case, the system works under the perfect condition especially under moderate turbulence condition. As the time correlation decreases, the selection of the best relay is no longer achieved and so the system certainly operates with a worse relay. In addition, we note that the correlation has a severe impact on the performance. In fact, for the case of completely outdated CSI ($\rho$ = 0), we observe a substantial degradation of the performance for moderate and strong turbulence conditions and the curves most likely look the same. In other words, considering either moderate or strong turbulence conditions has no remakable impact on the performance in case of uncorrelated CSIs. This observation proves that the system depends to a large extent on the correlation between the CSIs rather than the state of the optical channels. This is nothing but to say that is important to achieve perfect CSI channels estimation than to focus on the atmospheric turbulence conditions.\\
Fig.~3 presents the variations of the outage probability versus the outage threshold $\gamma_{\text{th}}$ [dB] for different values of the level of impairments ($\kappa_1, \kappa_2$). For lower values of $\gamma_{\text{th}}$, the performance under the hardware impairments slightly deviates from the case of ideal system. However, as the outage threshold increases, the outage performance experiences a rapid convergence to a unity and this convergence becomes more faster as the impairment level grows up. In fact, we observe that for the given values of the impairment level 0.2, 0.3 and 0.4, the system saturates at the following SNDR thresholds 4.6, 7.5 and 10.8 dB respectively, while the ideal system saturates very slowly for an outage threshold greater than 20 dB.
\begin{center}
\includegraphics[width=9cm,height=6.5cm]{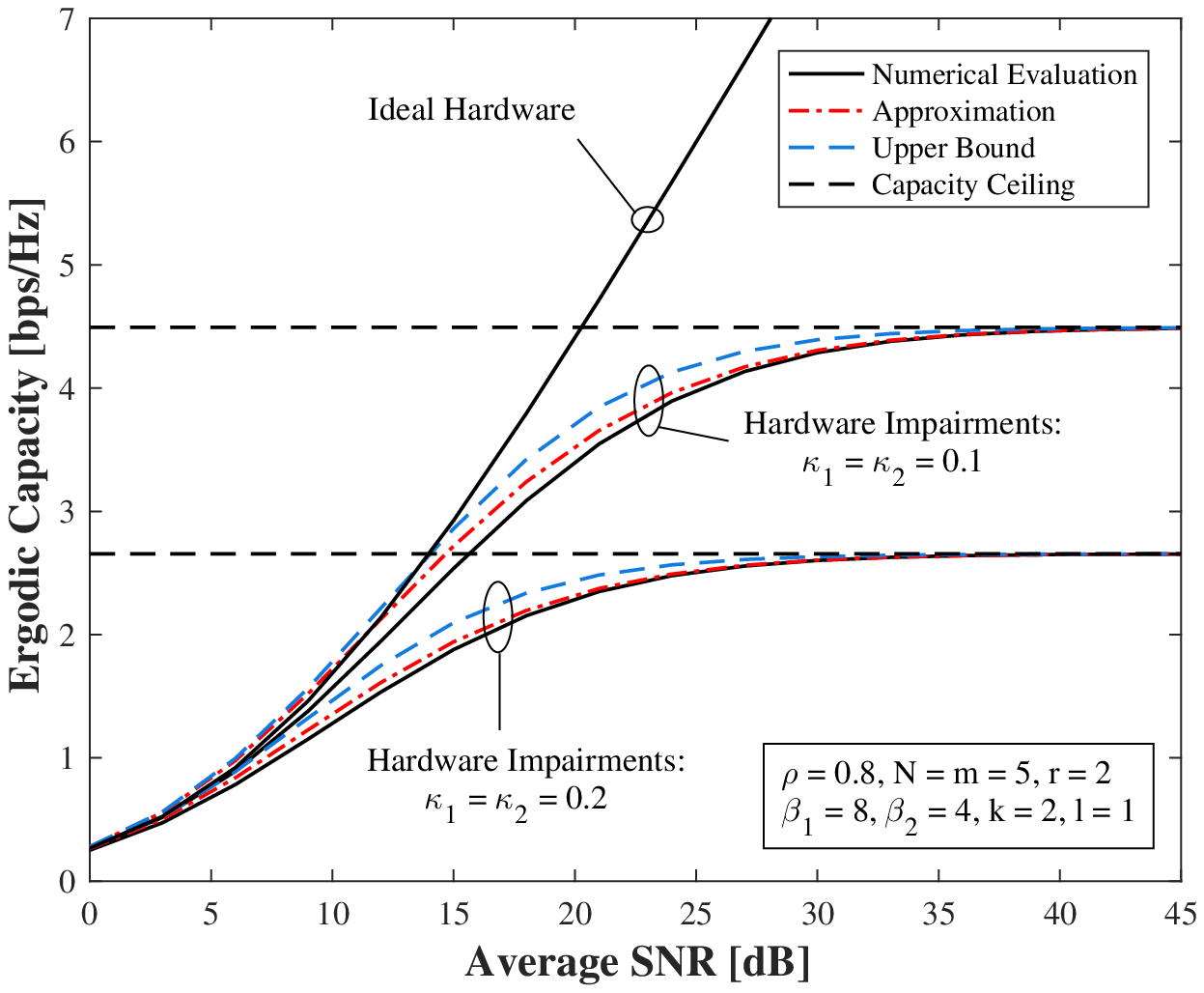}
\captionof{figure}{Outage  probability versus the SNDR threshold for ideal and non-ideal hardware}
\label{fig1}
\end{center}
The variations of the EC versus the average SNR is given by Fig.~4. For the ideal hardware case, as the SNR increases, the EC grows indefinitely. Regarding the non-ideal harware, the impairments have small impact on the system for low SNRs, but it becomes very deleterious at high SNRs. In fact, the EC converges to a capacity ceiling $C^*$, as shown by corollary 1, which is inversely proportional to the level of the impairments, i.e, as the impairment level increases, the ceiling decreases. The approximate form and the upper bound of the EC are also shown in Fig.~4. Although they deviate from the exact EC at low SNRs, they are asymptotically in agreement and converge to the capacity ceiling $C^*$.
\section{Conclusion}
In this work, we investigate the performance analysis of a mixed RF/FSO system with multiple relays employing the amplify-and-forward relaying scheme. Partial relay selection with outdated CSI is adopted to pick one relay for forwarding the signal. Because of its accuracy compared to the Log-Normal and Double-Gamma  distributions, Double-Weibull fading is used as a model of the optical irradiance. We conclude that for moderate turbulence, both the correlation coefficient and the detection method have significant impacts on the system. We observe that the system performs better under the heterodyne mode than IM/DD. We also note that as the time correlation increases, the channel estimation enhances and the performance improves substantially. However, as the correlation becomes very low, the turbulences have no longer impact on the performance and the system depends only on the CSIs correlation. Furthermore, we introduce a general model of hardware impairments to the source and the relays. We conclude that for lower values of the average SNR, the hardware impairments have no observable impacts on the system. However, as the average SNR grows largely, the impairments impact becomes noticeable by a quick saturation of the outage probability and the ergodic capacity. Finally, as an extention of this work, we intend to study the effects of some specified hardware impairments such as the HPA non-linearities and the I/Q imbalance on the mixed RF/FSO system and to quantify the impacts of the different parameters of each hardware impairments on some performance metrics of the system. 
\bibliographystyle{IEEEtran}
\bibliography{main}
\end{document}